\newcommand{\NP}{\textit{NP}\xspace}
\newcommand{\ourproblemlong}[1]{{\sc $#1$-Clique2Path Planarity}}
\newcommand{\ourproblem}[1]{{\sc $#1$-C2PP}}
\let\doendproof\endproof
\renewcommand\endproof{~\hfill$\qed$\doendproof}
\begin{document}
	
\title{Turning Cliques into Paths to Achieve Planarity\thanks{This work began at the Bertinoro Workshop on Graph Drawing 2018. Research was partially supported by DFG grant Ka812/17-1 and MIUR-DAAD Joint Mobility Program n.57397196 (PA), by Young Scholar Fund/AFF - Univ. Konstanz (KK), by NSF grants CCF-1740858 - CCF-1712119 (SK), by project ``Algoritmi e sistemi di analisi visuale di reti complesse e di grandi dimensioni" - Ric. di Base 2018, Dip. Ingegneria - Univ. Perugia (GL, AT), by project GEO-SAFE n.H2020-691161 (AN).}
}
\author{Patrizio Angelini\inst{1}, Peter Eades\inst{2}, Seok-Hee Hong\inst{2}, \\ Karsten Klein\inst{3}, Stephen Kobourov\inst{4}, Giuseppe Liotta\inst{5}, \\ Alfredo Navarra\inst{5}, Alessandra Tappini\inst{5}}

\date{}

\institute{ University of T\"ubingen, Germany
\email{angelini@informatik.uni-tuebingen.de}
\and 
The University of Sydney, Australia
\email{\{peter.eades,seokhee.hong\}@usyd.edu.au}
\and University of Konstanz, Germany
\email{karsten.klein@uni-konstanz.de} 
\and
University of Arizona, USA
\email{kobourov@cs.arizona.edu}
\and University of Perugia, Italy
\email{\{giuseppe.liotta,alfredo.navarra\}@unipg.it}
\email{alessandra.tappini@studenti.unipg.it}
}

\maketitle


\begin{abstract}
Motivated by hybrid graph representations, we introduce and study the following beyond-planarity problem, which we call \ourproblemlong{h}: Given a graph $G$, whose vertices are partitioned into subsets of size at most $h$, each inducing a clique, remove edges from each clique so that the subgraph induced by each subset is a path, in such a way that the resulting subgraph of $G$ is planar.
We study this problem when $G$ is a simple topological graph, and establish its complexity in relation to $k$-planarity. 
We prove that \ourproblemlong{h} is NP-complete even when $h=4$ and $G$ is a simple $3$-plane graph, while it can be solved in linear time, for any $h$, when $G$ is $1$-plane.
\end{abstract}

\section{Hybrid Representations}

A common problem in the visual analysis of real-world networks is that dense subnetworks create occlusions and hairball-like structures in node-link diagrams generated by standard layout algorithms, e.g., force-directed methods. On the other hand, different representations, such as adjacency matrices, are well suited for dense graphs but make neighbor identification and path-tracing more difficult~\cite{DBLP:journals/ivs/GhoniemFC05,DBLP:conf/gd/OkoeJK17}. 
\emph{Hybrid graph representations} combine different representation meta\-phors in order to exploit their strengths and overcome their drawbacks. 

The first example of hybrid representation was the \emph{NodeTrix} model~\cite{HFM07}, which combines node-link diagrams with adjacency-matrix representations of the denser subgraphs~\cite{DBLP:journals/jgaa/LozzoBFP18,DBLP:conf/gd/GiacomoLPT17,HFM07,YANG17}. Another example of hybrid representations are  \emph{intersection-link representations}~\cite{DBLP:journals/jgaa/AngeliniLBFPR17}. In this model vertices are geometric objects and edges are either intersections between  objects (\emph{intersection edges}), or crossing-free Jordan arcs attaching at their boundary (\emph{link edges}).
Different types of  objects determine different intersection-link representations. 

In~\cite{DBLP:journals/jgaa/AngeliniLBFPR17}, \emph{clique-planar} drawings are defined as intersection-link representations in which the objects are isothetic rectangles, and the partition into intersection- and link-edges is given in the input, so that the graph induced by the intersection-edges is composed of a set of vertex-disjoint cliques. The corresponding recognition problem is called {\sc Clique-planarity}, and it has been proved NP-complete in general and polynomial-time solvable in restricted cases.

We study {\sc Clique-planarity} when all cliques have bounded size. As proved in~\cite{DBLP:journals/jgaa/AngeliniLBFPR17}, the {\sc Clique-planarity} problem can be reformulated in the terminology of \emph{beyond-planarity}~\cite{DBLP:journals/corr/abs-1804-07257,DBLP:journals/csr/KobourovLM17}, as follows. Given a graph $G=(V,E)$ and a partition of its vertex set $V$ into subsets $V_1,\dots,V_m$ such that the subgraph of $G$ induced by each subset $V_i$ is a clique, the goal is to compute a planar subgraph $G'=(V,E')$ of $G$ by replacing the clique induced by $V_i$, for each $i = 1,\dots,m$, with a path spanning the vertices of $V_i$. We call \ourproblemlong{h} (for short, \ourproblem{h}) the version of this problem in which each clique has size at most~$h$;~see~Fig.~\ref{fig:example}.

We remark that the version of \ourproblem{h} in which the input graph $G$ is a \emph{geometric graph}, i.e., it is drawn in the plane with straight-line edges, has been recently studied by Kindermann et al.~\cite{kkrss-tpsfp-eurocg18} in a different context. The input of their problem is a set of colored points in the plane, and the goal is to decide whether there exist straight-line spanning trees, one for each same-colored point subset, that do not cross each other. Since edges are straight-line, their drawings are determined by the positions of the points, and hence each same-colored point subset can in fact be seen as a straight-line drawing of a clique, from which edges have to be removed so that each clique becomes a tree and the drawing becomes planar. They proved NP-completeness for the case in which the spanning tree must be a path, even when there are at most $4$ vertices with the same color. This implies that \ourproblem{4} for geometric graphs in NP-complete. On the other hand, they provided a linear-time algorithm when there exist at most $3$ vertices with the same color, which then extends to \ourproblem{3} for geometric graphs.

In this paper, we study the version of \ourproblem{h} in which the input graph $G$ is a \emph{simple topological graph}, that is, it is embedded in the plane so that each edge is a Jordan arc connecting its end-vertices; by simple we mean that a Jordan arc does not pass through any vertex, and does not intersect any arc more than once (either with a proper crossing or sharing a common end-vertex); finally, no three arcs pass through the same point. 
Our main goal is to study the complexity of this problem in relation to the well-studied class of \emph{$k$-planar graphs}, i.e., those that admit drawings in which each edge has at most $k$ crossings~\cite{DBLP:journals/jgaa/AngeliniLBFPR17,DBLP:conf/compgeom/Bekos0R17,DBLP:journals/corr/abs-1804-07257,DBLP:journals/combinatorica/PachT97}.

We observe that the NP-completeness of \ourproblem{4} for geometric graphs already implies the NP-completeness of \ourproblem{4} for simple topological graphs; also, though not explicitly mentioned in~\cite{kkrss-tpsfp-eurocg18}, it is possible to show that the instances produced by that reduction are $4$-plane (see Appendix~\ref{app:eurocg}). 
We strengthen this result by proving in Section~\ref{se:hardness} that \ourproblem{4} is NP-complete even for simple topological $3$-plane graphs. On the positive side, we prove in Section~\ref{se:polynomial} that the \ourproblem{h} problem for simple topological $1$-plane graphs can be solved in linear time for any value of $h$. We finally remark that the 2-SAT formulation used in~\cite{kkrss-tpsfp-eurocg18} to solve \ourproblem{3} for geometric graphs can be easily extended to solve \ourproblem{3} for any simple topological graph.

For space reasons, some proofs have been omitted or sketched, and can be found in the appendix; the corresponding statements are marked with [*].

\section{\NP-completeness for simple topological $3$-plane graphs} \label{se:hardness}

In this section we prove that the \ourproblem{k} problem remains NP-complete for $k=4$ even when the input is a simple topological $3$-plane graph.  

Since the planarity of a simple topological graph can be checked in linear time, the \ourproblem{h} problem for simple topological $k$-plane graphs belongs to NP for all values of $h$ and~$k$. 
In the following, we prove the NP-hardness by means of a reduction from the {\sc Planar Positive 1-in-3-SAT} problem. In this version of the {\sc Satisfiability} problem, which is known to be NP-complete~\cite{DBLP:journals/jacm/MulzerR08}, each variable appears only with its positive literal, each clause has at most three variables, the graph obtained by connecting each variable with all the clauses it belongs to is planar, and the goal is to find a truth assignment in such a way that, for each clause, exactly one of its three variables is set to \texttt{True}.
For each $3$-clique we use in the reduction, there is a \emph{base edge}, which is crossing-free in the constructed topological graph, while the other two edges always have crossings. We call \emph{left} (\emph{right}) the edge that follows (precedes) the base edge in the clockwise order of the edges along the $3$-clique. Also, if an edge $e$ of a clique does not belong to the path replacing the clique, we say that~$e$ is \emph{removed}, and that all the crossings involving $e$ in $G$ are \emph{resolved}.
For each variable $x$, let $n_x$ be the number of clauses containing $x$. We construct a simple topological graph gadget $G_x$ for $x$, called \emph{variable gadget}; see the left dotted box in Fig.~\ref{fig:variable-gadget}. This gadget contains $2n_x$ $3$-cliques $t^x_1,\dots,t^x_{2n_x}$, forming a ring, so that the left (right) edge of $t^x_i$ only crosses the left (right) edge of $t^x_{i-1}$ and of $t^x_{i+1}$, for each $i=1,\dots,2n_x$. Also, gadget $G_x$ contains $n_x$ additional $3$-cliques, called $\tau^x_1,\dots,\tau^x_{n_x}$, so that the right edge of $\tau^x_j$ crosses the left edge of $t^x_{2j-1}$ and the right edge of $t^x_{2j}$, while the left edge of $\tau^x_j$ crosses the left edge of $t^x_{2j}$ and the right edge of $t^x_{2j-1}$.
\begin{figure}[tb]
	\centering
    \subfigure[\label{fig:variable-gadget}{}]
	{\includegraphics[width=.47\textwidth,page=1]{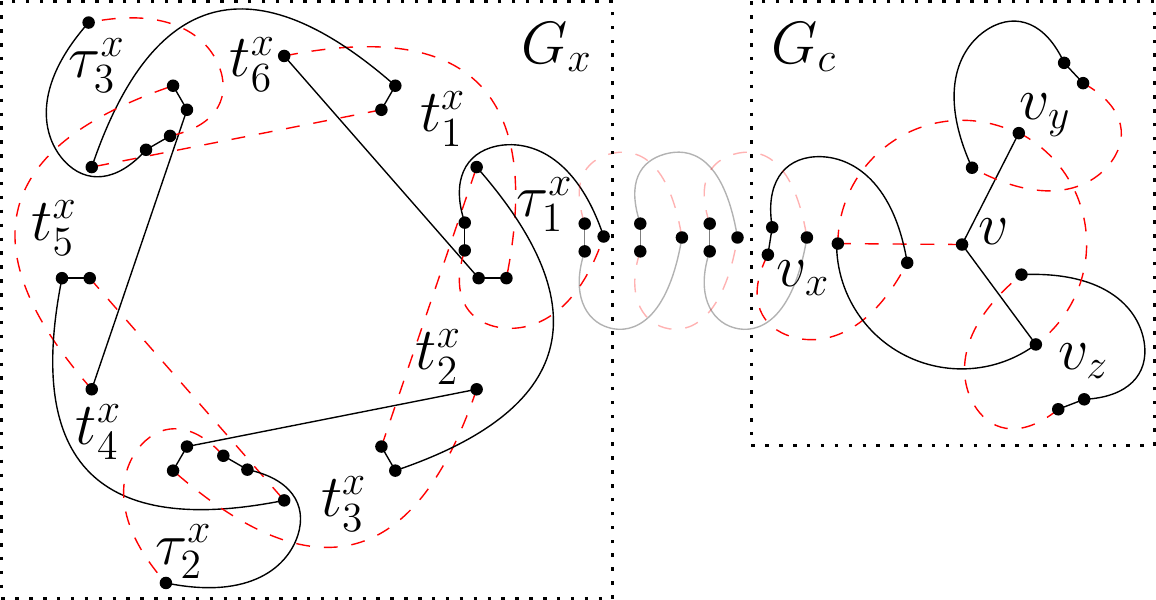}}
	\hfill
	\subfigure[\label{fig:clauses-all-false}{}]
	{\includegraphics[width=.16\textwidth,page=2]{variable-gadget.pdf}}
	\hfill
	\subfigure[\label{fig:clauses-two-true}{}]
	{\includegraphics[width=.16\textwidth,page=3]{variable-gadget.pdf}}
\caption{(a) The variable gadget $G_x$ for a variable $x$ is represented in the left dotted box. The clause gadget for a clause $c$ is represented in the right dotted box. The chain connecting $G_x$ to $G_c$ is represented with lighter colors. The removed edges are dashed red. (b) All variables are \texttt{False}. (c) At least two variables are \texttt{True}.}
	\label{fig:clauses}
\end{figure}
Then, for each clause $c$, we construct a topological graph gadget $G_c$, called \emph{clause gadget}, which is composed of a planar drawing of a $4$-clique, together with three $3$-cliques whose left and right edges cross the edges of the $4$-clique as in the right dotted box in Fig.~\ref{fig:variable-gadget}. In particular, observe that the right (left) edge of each $3$-clique crosses exactly one (two) edges of the $4$-clique.
Every $3$-clique in $G_c$ corresponds to one of the three variables of $c$. Let $x$ be one of such variables; assuming that $c$ is the $j$-th clause that contains $x$ according to the order of the clauses in the given formula, we connect the $3$-clique corresponding to $x$ in the clause gadget $G_c$ to the $3$-clique $\tau^x_j$ of the variable gadget $G_x$ of $x$ by a chain of $3$-cliques of odd length, as in Fig.~\ref{fig:variable-gadget}.

By construction, the resulting simple topological graph $G$ contains cliques of size at most $4$, namely one per clause, and hence is a valid instance of \ourproblem{4}. Also, by collapsing each variable and clause gadget into a vertex, and each chain connecting them into an edge, the resulting graph $G'$ preserves the planarity of the {\sc Planar Positive 1-in-3-SAT} instance. This implies that the only crossings for each edge of $G$ are with other edges in the gadget it belongs to and, possibly, with the edges of the $3$-cliques of a chain. Hence, $G$ is $3$-plane. Namely, each base edge is crossing-free; each internal edge of a $4$-clique has one crossing; each external edge of a $4$-clique has two crossings, and the same is true for the left and right edges of each $3$-clique in a chain; finally, the left and right edges of each $3$-clique in either a variable or a clause gadget has three crossings.

In the following we prove the equivalence between the original instance of {\sc Planar Positive 1-in-3-SAT} and the constructed instance $G$ of \ourproblem{4}. For this, we first give a lemma stating that variable gadgets correctly represent the behavior of a variable;  indeed they can assume one out of two possible states in any solution for \ourproblem{4}. The proof of the next lemma is in Appendix~\ref{app:proof-lemmavariable}.

\begin{restatable}{lemma}{lemmavariable}\label{le:variable} \emph{\textbf{[*]}}
Let $G_x$ be the variable gadget for a variable $x$ in $G$. Then, in any solution for \ourproblem{4}, either the left edge of each $3$-clique $\tau^x_j$, with $j=1,\dots,n_x$, is removed, or the right edge of each $3$-clique $\tau^x_j$ is removed.
\end{restatable}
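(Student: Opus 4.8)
The plan is to analyze the forced propagation of removed edges around the ring of $3$-cliques $t^x_1, \dots, t^x_{2n_x}$ in the variable gadget $G_x$. First I would observe that each $3$-clique must have exactly one of its three edges removed (since a triangle becomes a path by deleting one edge), and that the base edge can never be the removed one in a way that helps, because it is crossing-free — more precisely, I would argue that removing the base edge of $t^x_i$ leaves both its left and right edges present, and these cross the corresponding edges of $t^x_{i-1}$ and $t^x_{i+1}$, so those neighbors are forced to resolve the crossings by removing their crossing edges, and I would trace this to a contradiction (two edges of some clique removed, or an unresolvable crossing). The upshot of this first step is a local rule: for each ring clique $t^x_i$, the removed edge is either its left edge or its right edge.

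Next I would set up the propagation. If the left edge of $t^x_i$ is removed, then the crossing between the right edge of $t^x_i$ and the right edge of $t^x_{i+1}$ is unresolved on the $t^x_i$ side, so it must be resolved by removing the right edge of $t^x_{i+1}$; hence ``left removed at $i$'' forces ``right removed at $i+1$'' (and symmetrically ``right removed at $i$'' forces ``left removed at $i+1$''). Since the ring has even length $2n_x$, this alternating pattern is consistent and yields exactly two global states: either the left edge of $t^x_i$ is removed for all even $i$ and the right edge for all odd $i$, or vice versa. I would then combine this with the way each auxiliary clique $\tau^x_j$ attaches: the right edge of $\tau^x_j$ crosses the left edge of $t^x_{2j-1}$ and the right edge of $t^x_{2j}$, while the left edge of $\tau^x_j$ crosses the left edge of $t^x_{2j}$ and the right edge of $t^x_{2j-1}$. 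In state one (say left removed at odd indices, right removed at even indices), the left edge of $t^x_{2j-1}$ is removed and the right edge of $t^x_{2j}$ is removed, so both crossings on the right edge of $\tau^x_j$ are already resolved, while both crossings on the left edge of $\tau^x_j$ (with the left edge of $t^x_{2j}$ and the right edge of $t^x_{2j-1}$, both still present) are unresolved, forcing the left edge of $\tau^x_j$ to be removed. In the other state the roles swap and the right edge of each $\tau^x_j$ is forced to be removed. This gives exactly the two alternatives in the statement, and the two alternatives are mutually exclusive because a single $\tau^x_j$ determines, via the ring, the state of the whole gadget.

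I would also need to check that each of the two claimed global states is actually realizable as a valid partial solution (so that the lemma is not vacuous), i.e., that removing the prescribed edges indeed turns every clique into a path and resolves all internal crossings of $G_x$; this is a direct verification from the crossing pattern described in the construction, which I would state briefly rather than belabor.

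The main obstacle I expect is the first step: rigorously ruling out the possibility that some ring clique $t^x_i$ has its base edge removed, or that the forced chain of removals can ``escape'' the gadget through the chains attached at the $\tau^x_j$ cliques rather than propagating around the ring. Handling this cleanly requires carefully using the $3$-planarity bound (each left/right edge of a ring or gadget clique has exactly three crossings, all accounted for) to argue that a crossing between two edges of cliques in $G_x$ can only be resolved by removing one of those two edges, and that no clique can afford to remove two edges; once that local exchange property is nailed down, the global propagation around the even cycle is routine.
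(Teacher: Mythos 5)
Your proof is correct and follows essentially the same route as the paper's: exactly one edge per triangle is removed, the base edge is ruled out because keeping both left and right edges would force two removals in a neighboring ring clique, the left/right alternation then propagates consistently around the even cycle $t^x_1,\dots,t^x_{2n_x}$, and each $\tau^x_j$ is forced to lose the one edge whose two crossings remain unresolved. (A minor remark: you conclude that the ``left at odd indices'' state forces removal of the \emph{left} edge of each $\tau^x_j$, which matches the construction as literally described, while the paper's proof says \emph{right}; this is an internal left/right labelling slip on the paper's side and does not affect the lemma, whose content is only that all $\tau^x_j$ lose the same side.)
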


Given Lemma~\ref{le:variable}, we can associate the truth value of a variable $x$ with the fact that either the left or the right edge of each $3$-clique $\tau^x_j$ in the variable gadget $G_x$ of $G$ is removed. We use this association to prove the following theorem.

\begin{restatable}{theorem}{theoremourproblemfournpcomplete}\label{th:ourproblem-4-np-complete} \emph{\textbf{[*]}}
	The \ourproblem{4} problem is \NP-complete, even for $3$-plane graphs.
\end{restatable}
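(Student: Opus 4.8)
The plan is to complete the reduction started above by proving the equivalence between the {\sc Planar Positive 1-in-3-SAT} instance $\varphi$ and the constructed \ourproblem{4} instance $G$; membership in \NP has already been observed, since planarity of a simple topological graph is testable in linear time. Recall that each $3$-clique minus any one of its three edges is a path, so ``resolving'' a $3$-clique amounts to choosing whether to remove its left, right, or base edge, while the $4$-clique of a clause gadget must lose three of its six edges so that the remaining three form a Hamiltonian path on its four vertices.

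The first ingredient I would establish is the \emph{wire behavior} of the chains. Consider a chain of $3$-cliques of odd length connecting $\tau^x_j$ to the $3$-clique of $G_c$ corresponding to $x$. Since consecutive $3$-cliques of the chain cross only in their left/right edges, resolving one such crossing forces the state of the next $3$-clique, so the ``left-removed vs.\ right-removed'' state propagates deterministically along the whole chain; the odd length is chosen precisely so that the state fixed at $\tau^x_j$ determines, and is determined by, the state of the terminal $3$-clique inside $G_c$, and so that every chain can always be consistently resolved once its two endpoints are fixed. Combined with Lemma~\ref{le:variable}, this shows that any solution of \ourproblem{4} assigns to each variable $x$ a well-defined truth value, and that this value is exactly the one ``seen'' by every clause gadget containing $x$.

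The core of the proof is the analysis of a single clause gadget $G_c$: I would show that the $4$-clique of $G_c$ can be turned into a spanning path while simultaneously resolving all crossings between its edges and the left and right edges of the three incident $3$-cliques \emph{if and only if} exactly one of these three $3$-cliques is in its \texttt{True} state. For sufficiency, I would exhibit, for each of the three choices of the \texttt{True} $3$-clique, a triple of edges of the $4$-clique that forms a Hamiltonian path and that, together with removing the left or right edge of the other two $3$-cliques, resolves every crossing inside $G_c$, as in Fig.~\ref{fig:clauses}. For necessity, I would argue by case analysis that in the all-\texttt{False} configuration (Fig.~\ref{fig:clauses-all-false}) and in every configuration with at least two \texttt{True} $3$-cliques (Fig.~\ref{fig:clauses-two-true}) no triple of edges of the $4$-clique is simultaneously a Hamiltonian path and a cover of all the crossings on the $4$-clique; here the specific crossing pattern---each right edge of a $3$-clique crossing exactly one edge of the $4$-clique and each left edge crossing exactly two---is what makes the argument work.

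Putting the pieces together gives both directions. If $\varphi$ has a $1$-in-$3$ satisfying assignment, set each variable gadget to the state corresponding to the truth value of $x$ (consistent by Lemma~\ref{le:variable}), propagate along every chain, resolve every clause gadget using the fact that exactly one incident $3$-clique is \texttt{True}, and remove, say, the left edge of each still-unconstrained $3$-clique; since base edges are crossing-free and every remaining crossing has been resolved, the result is a planar subgraph of $G$ in which each subset induces a path, i.e., a solution of \ourproblem{4}. Conversely, any solution of \ourproblem{4} yields, via Lemma~\ref{le:variable} and the wire behavior of the chains, a truth assignment which the clause-gadget analysis forces to set exactly one variable of each clause to \texttt{True}, i.e., a $1$-in-$3$ satisfying assignment of $\varphi$. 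The main obstacle is precisely this clause-gadget case analysis: one must verify, for the concrete topological drawing, that the three ``exactly-one-\texttt{True}'' resolutions of the $4$-clique exist and remain compatible with the other two $3$-cliques becoming paths, while ruling out every $0$-\texttt{True} and every $\ge 2$-\texttt{True} configuration---the step where the exact number and placement of the crossings in the gadget is essential.
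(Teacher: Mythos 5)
Your proposal follows essentially the same route as the paper's own proof: membership in \NP via linear-time planarity testing, the odd-length chains propagating the left/right state from Lemma~\ref{le:variable} to the clause gadgets, and the three-way case analysis of the clause gadget (all-\texttt{False}, at least two \texttt{True}, exactly one \texttt{True}), which the paper makes concrete by observing that the first two cases force the removal of all edges of the triangles $(v_x,v_y,v_z)$ and $(v_x,v_y,v)$ respectively, leaving no spanning path of the $4$-clique, while the third case is resolved by removing $(v,v_x)$, $(v_x,v_y)$, $(v_y,v_z)$. The argument is correct and matches the paper's structure in both directions of the equivalence.
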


\begin{proof}[sketch]
Given an instance of {\sc Planar Positive 1-in-3-SAT}, we construct an instance $G$ of \ourproblem{4} in linear time as described above. We prove one direction of the equivalence between the two problems. The other direction follows a similar reasoning.
Suppose that there exists a solution for \ourproblem{4}, i.e., a set of edges of $G$ whose removal resolves all  crossings. By Lemma~\ref{le:variable}, for each variable $x$ either the left or the right edge of each $3$-clique $\tau^x_j$ in gadget $G_x$ is removed. We assign \texttt{True} (\texttt{False}) to $x$ if the right (left) edge is removed.  

We first claim that for each clause $c$ that contains variable $x$, the right (left) edge of the $3$-clique $t_c(x)$ of the clause gadget $G_c$ corresponding to $x$ is removed if and only if the right (left) edge of each $3$-clique $\tau^x_j$ is removed. Consider the chain that connects $t_c(x)$ with a $3$-clique $\tau^x_j$ of $G_x$. For any two consecutive $3$-cliques along the chain the left edge of one $3$-clique and the right edge of the other $3$-clique must be removed. Since the chain has odd length, the truth value of $G_x$ is transferred to the $3$-clique $t_c(x)$ of $G_c$ and thus the claim follows. 

Consider now a clause $c$ with variables $x$, $y$, and $z$. Let $t_c(x)$, $t_c(y)$, and $t_c(z)$ be the $3$-cliques of the clause gadget $G_c$ of $c$ corresponding to $x$, $y$, and $z$, respectively. Let $v$ be the central vertex of the $4$-clique of $G_c$, and let $v_x$, $v_y$, $v_z$ be the vertices of this $4$-clique lying inside $t_c(x)$, $t_c(y)$, and $t_c(z)$ (see Fig.~\ref{fig:clauses}). Assume that $v_x$, $v_y$, and $v_z$ appear in this clockwise order around $v$. 
We now show that, for exactly one of $t_c(x)$, $t_c(y)$, and $t_c(z)$ the right edge is removed, which implies that exactly one of $x$, $y$, and $z$ is \texttt{True} and hence the instance of {\sc Planar Positive 1-in-3-SAT} is positive.
Assume that for each of $t_c(x)$, $t_c(y)$, and $t_c(z)$ the left edge is removed (i.e., all the three variables are set to \texttt{False}), as in Fig.~\ref{fig:clauses-all-false}. The crossings between the right edges of the three $3$-cliques and the three edges of triangle $(v_x,v_y,v_z)$ are not resolved. All edges of this triangle should be removed, which is not possible since the remaining edges of the $4$-clique do not form a path.
Assume now that for at least two of the $3$-cliques, say $t_c(x)$ and $t_c(y)$, the right edge is removed (i.e., $x$ and $y$ are set to \texttt{True}), as in Fig.~\ref{fig:clauses-two-true}. Since each edge of triangle $(v_x,v_y,v)$ is crossed by the left edge of one of $t_c(x)$ and $t_c(y)$, by construction, these crossings are not resolved. Hence, all edges of $(v_x,v_y,v)$ should be removed, which is not possible since the remaining edges of the $4$-clique do not form a path of length $4$.
Finally, assume that for exactly one of the $3$-cliques, say $t_c(x)$, the right edge is removed (i.e., $x$ is the only one set to \texttt{True}), as in Fig.~\ref{fig:variable-gadget}. By removing edges $(v,v_x)$, $(v_x,v_y)$, and $(v_y,v_z)$, all crossings are resolved; the remaining edges of the $4$-clique form a path of length $4$, as desired.
\end{proof}

\section{\ourproblemlong{h} and $1$-Planarity} \label{se:polynomial}

In this section we show that, when the given simple topological graph is 1-plane, problem \ourproblem{h} can be solved in linear time in the size of the input, for any $h$. We consider all possible simple topological 1-plane cliques and show that the problem can be solved using only local tests, each requiring constant time. Note that $h\leq 6$, since $K_6$ is the largest $1$-planar complete graph~\cite{DBLP:journals/csr/KobourovLM17}.

Simple topological $1$-plane graphs containing cliques with at most four vertices that cross each other can be constructed, but it is easy to enumerate all these graphs (up to symmetry); 
see Fig.~\ref{fig:34comb}. Note that such graphs involve at most two cliques and that if $K_4$ has a crossing, combining it with any other clique would violate $1$-planarity; see Fig.~\ref{fi:cliques-a} and Fig.~\ref{fi:cliques-b}. The next lemma accounts for cliques with five or six vertices.

\begin{figure}[tb]
	\centering
    	\subfigure[\label{fi:cliques-a}] {\includegraphics[width=0.13\textwidth,page=2]{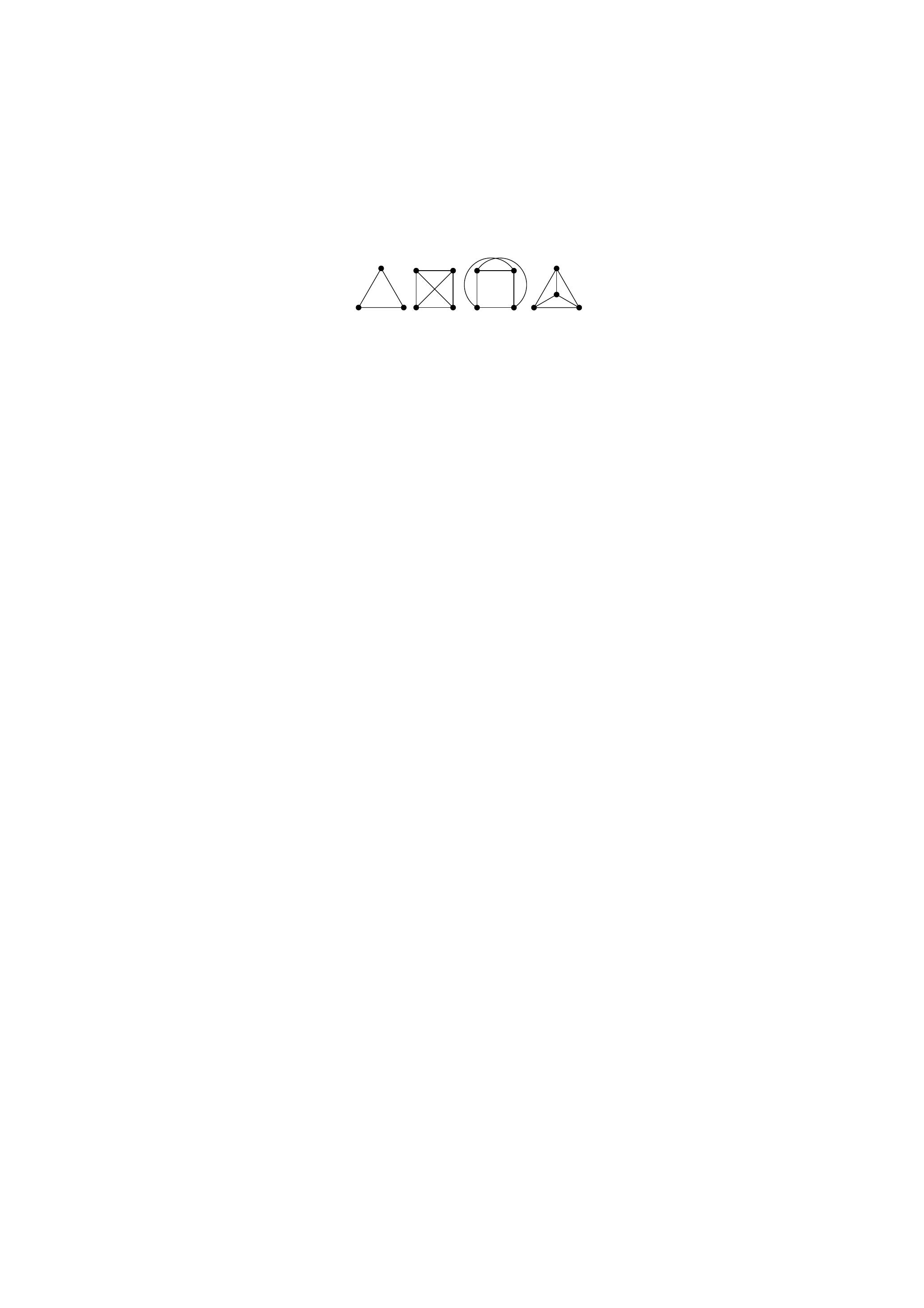}}
	\hfil	\subfigure[\label{fi:cliques-b}] {\includegraphics[width=0.13\textwidth,page=3]{3_4-cliques}}
	\hfill
	\subfigure[\label{fi:cliques-inv-a}] {\includegraphics[width=0.105\textwidth,page=1]{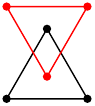}}
	\hfill
\subfigure[\label{fi:cliques-inv-b}] {\includegraphics[width=0.105\textwidth,page=2]{3_4-cliques-inv}}
	\hfill
\subfigure[\label{fi:cliques-inv-c}] {\includegraphics[width=0.105\textwidth,page=3]{3_4-cliques-inv}}
	\hfill
\subfigure[\label{fi:cliques-inv-d}] {\includegraphics[width=0.105\textwidth,page=4]{3_4-cliques-inv}}
	\hfill
\subfigure[\label{fi:cliques-inv-e}] {\includegraphics[width=0.115\textwidth,page=5]{3_4-cliques-inv}}
\caption{All 1-plane graphs involving one or more cliques of type $K_3$ and $K_4$.} \label{fig:34comb}
\end{figure}

\begin{lemma}\label{lem::cliquefive}
There exists no $1$-plane simple topological graph that contains two cliques, one of which with at least five vertices, whose edges cross each other.
\end{lemma}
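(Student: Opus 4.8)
The plan is to argue by contradiction using edge-counting on the crossing structure together with the density bounds for $1$-planar graphs. Suppose, for contradiction, that there is a $1$-plane simple topological graph $H$ containing two cliques $A$ and $B$ that cross each other, where without loss of generality $|A|=a\ge 5$. Since $K_6$ is the largest $1$-planar complete graph, $a\le 6$; and since $A$ and $B$ cross, $B$ has at least $2$ vertices and the subgraph induced by $V(A)\cup V(B)$ is itself $1$-plane. First I would restrict attention to this induced subgraph, so that the whole drawing consists only of the edges of $A$ and $B$ plus whatever edges are forced by $V(A)\cup V(B)$ being a vertex set; since the two cliques are vertex-disjoint (they are parts of a partition), there are no additional edges, and $H$ is exactly the disjoint union of two cliques whose drawings interleave in the plane.

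Next I would localize the crossings. In a $1$-plane drawing each edge carries at most one crossing, so each edge of $A$ is crossed at most once and, symmetrically for $B$; moreover since $A$ and $B$ are drawn as topological cliques that cross \emph{each other}, at least one edge of $A$ crosses an edge of $B$. The key step is to count: $K_5$ drawn $1$-planarly already uses up a prescribed amount of ``crossing budget'' internally. Concretely, $K_5$ is not planar, so in any $1$-plane drawing of $K_5$ at least one pair of its own edges must cross; I would pin down exactly how many internal crossings a $1$-plane drawing of $K_5$ (resp. $K_6$) must have and which edges are involved, then observe that the edges of $A$ that already carry an internal crossing cannot additionally cross an edge of $B$. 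So the crossing with $B$ must be realized by edges of $A$ that are crossing-free within the drawing of $A$ alone. The plan is then to show, by a short case analysis on the possible $1$-plane drawings of $K_5$ and $K_6$ (these are known to be essentially unique up to homeomorphism), that every crossing-free edge of $A$ lies on the outer boundary of a region that is ``sealed off'' from the rest of the plane by crossed edges, so that any arc of $B$ reaching such an edge from outside would be forced to cross a second edge of $A$ — violating $1$-planarity for that arc of $B$, or for the already-crossed edge of $A$.

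The cleanest route, and the one I would actually write, is the global density argument, which avoids drawing every picture. A $1$-planar graph on $n$ vertices has at most $4n-8$ edges. Apply this to the graph $H^\times$ obtained from the $1$-plane drawing of $A\cup B$ by planarizing it, i.e. replacing each crossing with a degree-$4$ dummy vertex: if $A\cup B$ has $n=a+b$ real vertices and $c$ crossings, then $H^\times$ is planar on $n+c$ vertices with $\binom a2+\binom b2+2c$ edges, so $\binom a2+\binom b2+2c\le 3(n+c)-6$, giving $c\le \binom a2+\binom b2-3(a+b)+6$. I would combine this upper bound with a \emph{lower} bound: every edge is crossed at most once, so $c\le \tfrac12(\binom a2+\binom b2)$; and crucially, the non-planarity of $A$ forces at least $\mathrm{cr}(K_a)$ crossings to be spent internally within $A$ (with $\mathrm{cr}(K_5)=1$, $\mathrm{cr}(K_6)=3$), and by $1$-planarity each such internal crossing ``blocks'' two edges of $A$ from participating in an $A$–$B$ crossing. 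Feeding $a=5$ and $a=6$ with the smallest admissible $b$ (namely $b=2$, since $B$ needs only an edge to cross $A$, and larger $b$ only makes the $1$-planarity constraint tighter via the $4n-8$ bound) into these inequalities yields a contradiction: there is simply no room left for even one $A$–$B$ crossing once $A$'s internal crossings and the degree constraints at the planarized crossing vertices are accounted for.

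The main obstacle I anticipate is making the ``internal crossings block $A$–$B$ crossings'' step fully rigorous rather than heuristic: the density inequality alone does not obviously forbid a crossing between the cliques when $b$ is small, so the argument genuinely needs the structural fact that in a $1$-plane drawing the edges of $K_5$ (and $K_6$) that are available for an external crossing are confined inside faces bounded by already-crossed edges. I would handle this by invoking the known classification of $1$-plane drawings of $K_5$ and $K_6$ — in each such drawing the crossing-free edges form the boundary of the outer face and of a bounded number of internal triangular faces, all of whose other sides are crossed edges — and then arguing that an arc of $B$ entering any face incident to a crossing-free edge of $A$ must exit it across another edge of $A$, which is either crossing-free (a second crossing on $B$'s arc, impossible) or already crossed (a second crossing on that $A$-edge, impossible). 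Either way we reach a contradiction, completing the proof.
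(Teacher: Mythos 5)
There is a genuine gap, on both of the routes you sketch. The ``global density argument'' that you say you would actually write does not work: Euler's formula applied to the planarization gives $\binom a2+\binom b2+2c\le 3(a+b+c)-6$, which is a \emph{lower} bound $c\ge\binom a2+\binom b2-3(a+b)+6$ on the number of crossings, not the upper bound you state, and for $a\in\{5,6\}$ and small $b$ this bound is negative, hence vacuous. Combined with the trivial upper bound $c\le\frac12\bigl(\binom a2+\binom b2\bigr)$ there is ample ``room'' for one internal crossing of $A$ plus one $A$--$B$ crossing, so no contradiction arises; you acknowledge this yourself, which means the proof really rests on the structural fallback. But that fallback asserts two things that are false. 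First, it is not true that every crossing-free edge of $A$ bounds a region sealed off by crossed edges: in the (essentially unique) $1$-plane drawing of $K_5$ --- planar $K_4$ on $\{1,2,3,4\}$ with $5$ placed in face $(1,2,4)$ and the single crossing $53\times 14$ --- the face $(1,2,5)$ of the planarization is bounded by three crossing-free edges. Second, an arc of $B$ entering a face need not ``exit it across another edge of $A$'': it can simply terminate at a vertex of $B$ placed inside that face. Indeed, taking $b=2$ (which you explicitly treat as the base case) one can put one endpoint of the $B$-edge inside face $(1,2,5)$ and the other in the outer face, crossing edge $12$ exactly once; the result is $1$-plane, so the contradiction you claim simply does not occur. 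Your remark that larger $b$ ``only makes the constraint tighter'' is backwards: $b=2$ is precisely the degenerate case where the statement fails, and the proof needs $b\ge 3$.

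The missing ingredient is the one the paper's proof is built on. Take the second clique to be a triangle $H$ (any crossed edge of a clique with at least three vertices lies on a $3$-cycle, and $K_6\supseteq K_5$, so the general case reduces to $K_5$ versus $K_3$). If an edge of $H$ crosses an edge of $K_5$, then some vertex of $H$ lies inside a face $f$ of the planarization $K'$ and some vertex lies outside, so the $3$-cycle $H$ must cross the boundary of $f$ at least twice; since each edge of $H$ carries at most one crossing, these are two \emph{distinct} edges of $H$ crossing two \emph{distinct} edges of $f$ (each edge of $f$ also carries at most one crossing). Now a two-case analysis on $f$ (every face of $K'$ is a triangle with at most one dummy vertex) finishes: if $f$ has a dummy vertex, two of its sides are halves of already-crossed edges of $K$, so both exiting edges of $H$ would have to cross the single remaining side, which is impossible; if $f$ has no dummy vertex, the two crossed sides of $f$ share a vertex $u$ of degree $4$ in $K_5$, and one checks that the third edge of $H$ cannot then be drawn with at most one crossing. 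Your proposal never invokes the cycle structure of $B$, and without it the parity step --- and hence the whole contradiction --- is unavailable.
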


\begin{proof}
Consider a simple $1$-plane graph $G$ that contains two disjoint cliques $K$ and $H$, with five and three vertices, respectively. Let $K'$ be the simple plane topological graph obtained from $K$ by replacing each crossing with a dummy vertex. By $1$-planarity, every face of $K'$ is a triangle and contains at most one dummy vertex. 
Suppose, for a contradiction, that there exists a crossing between an edge of $K$ and an edge of $H$ in $G$. Then there would exist at least a vertex $v$ of $H$ inside a face $f$ of $K'$ and at least one outside $f$. Since $H$ is a triangle, there must have been two edges that connect vertices inside $f$ to vertices outside $f$.
If $f$ contains one dummy vertex, then two of its edges are not crossed by edges of $H$, as otherwise $G$ would not be $1$-plane. Hence, both the edges that connect vertices inside $f$ to vertices outside $f$ cross the other edge of $f$, a contradiction. 
If $f$ contains no dummy vertices, then each edge of $f$ admits one crossing. Let $u$ be the vertex of $f$ that is incident to the two edges crossed by edges of $H$. Since $u$ has degree $4$ in $K$, it is not possible to draw the third edge of $H$ so that it crosses only one edge of $K$, which completes the proof.
\end{proof}

Combining the previous discussion with Lemma~\ref{lem::cliquefive}, we conclude that, for each subgraph of the input graph $G$ that consists either of a combination of at most two cliques of size at most $4$, as in Fig.~\ref{fig:34comb}, or of a single clique not crossing any other clique, the crossings involving this subgraph (possibly with other edges not belonging to cliques) can only be resolved by removing its edges, which can be checked in constant time. In the next theorem, $n$ denotes the number of vertices.

\begin{theorem}\label{th:P2}
	\ourproblem{h} is $O(n)$-time solvable for simple topological $1$-plane graphs. 
\end{theorem}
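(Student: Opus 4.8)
The plan is to leverage the two structural facts just established --- the enumeration in Fig.~\ref{fig:34comb} and Lemma~\ref{lem::cliquefive} --- which together say that, with respect to crossings, the clique edges of $G$ fall into pairwise \emph{independent blocks}: each block is either a single clique of size at most $6$ that crosses no other clique, or a pair of cliques of size at most $4$ crossing each other (one of the constant-size configurations of Fig.~\ref{fig:34comb}). Since $G$ is $1$-plane, every edge participates in at most one crossing, so an edge of a clique $C$ is crossed (if at all) either by an edge of the unique other clique of $C$'s block, by another edge of $C$ itself when $C\in\{K_5,K_6\}$, or by a non-clique edge. The algorithm decides each block locally in $O(1)$ time and takes the union of the local solutions together with all non-clique edges.

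I would first compute in $O(n)$ time the planarization of $G$ together with, for each edge, the edge crossing it (this is well defined as $G$ is $1$-plane, and the planarization has size $O(n)$ because $1$-planar graphs are sparse). The partition of $V$ yields at most $n$ cliques contributing $O(n)$ edges in total, so listing all crossings and grouping the clique edges into blocks (two cliques are in the same block iff an edge of one crosses an edge of the other --- a relation whose components have size at most two by the structural facts) takes $O(n)$ time. Next: (i) if some crossing is between two edges that belong to no clique, report that $G$ is a no-instance, since only clique edges may be removed and no removal of clique edges eliminates such a crossing; (ii) otherwise, every crossing on a non-clique edge $f$ is with a clique edge $e$, and since $f$ cannot be removed, $e$ must be removed --- mark every such $e$ as \emph{forced}. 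Finally, for each block $B$ enumerate the constantly many ways of choosing a spanning path in each clique of $B$, and accept $B$ if some choice (a) omits every forced edge of $B$ and (b) for every crossing whose two edges both belong to cliques of $B$, omits at least one of the two; output ``yes'' iff every block is accepted.

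Correctness is straightforward: a subgraph produced this way replaces each clique by a spanning path, and every crossing of $G$ is resolved --- crossings between two non-clique edges are excluded by~(i), a crossing between a clique edge and a non-clique edge by~(ii), and a crossing between two clique edges lies entirely inside one block and is killed by~(b) --- while removing edges never creates crossings, so the drawing is plane; conversely any solution restricts, on each block, to a choice satisfying (a) and (b), so a rejected block certifies a no-instance. The running time is $O(n)$, since steps (i)--(ii) scan the $O(n)$ crossings once and each of the $O(n)$ blocks is handled in $O(1)$ time. The one point that genuinely needs the preceding results is the \textbf{independence of the blocks}: because every clique edge has at most one crossing, no crossing constraint couples two different blocks (nor couples a block with the forced markings, which are handled globally in step~(ii)), so the feasibility of the whole instance is exactly the conjunction of the per-block feasibilities; without Lemma~\ref{lem::cliquefive} and the analysis behind Fig.~\ref{fig:34comb} the blocks could be arbitrarily large and the local tests would not suffice.
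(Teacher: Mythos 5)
Your proposal is correct and follows essentially the same route as the paper, which likewise derives the theorem from Lemma~\ref{lem::cliquefive} and the enumeration of Fig.~\ref{fig:34comb} by arguing that all clique interactions are confined to constant-size configurations testable locally in $O(1)$ time. You actually spell out more of the algorithmic detail than the paper does (the block decomposition, the forced removals coming from crossings with non-clique edges, and the case of a crossing between two non-clique edges), but these are elaborations of the same argument rather than a different approach.
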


\section{Open Problems}\label{se:conclusions}

We studied the \ourproblemlong{h} problem for simple topological $k$-plane graphs; we proved that this problem is NP-complete for $h = 4$ and $k=3$, while it is solvable in linear time for every value of $h$, when $k=1$. The natural open question is: what is the complexity for simple topological $2$-plane graphs? 

Kindermann et al.~\cite{kkrss-tpsfp-eurocg18} recently proved that problem \ourproblem{4} is NP-complete for geometric $4$-plane graphs. It would be interesting to study this geometric version of the problem for $2$-plane and $3$-plane graphs.

Finally, note that the version of the \ourproblem{h} problem when the input is an abstract graph (which is equivalent to {\sc Clique Planarity}~\cite{DBLP:journals/jgaa/AngeliniLBFPR17}) is NP-complete when $h \in O(n)$. What if $h$ is bounded by a constant or a sublinear function? 
We remark that, for $h=3$, this version of the problem is equivalent to {\sc Clustered Planarity}, when restricted to instances in which the graph induced by each cluster consists of three isolated vertices.

\clearpage

\bibliographystyle{splncs04}
\bibliography{lit}

\clearpage
\appendix
\makeatletter
\noindent
\rlap{\color[rgb]{0.51,0.50,0.52}\vrule\@width\textwidth\@height1\p@}%
\hspace*{7mm}\fboxsep1.5mm\colorbox[rgb]{1,1,1}{\raisebox{-0.4ex}{%
\large\selectfont\bfseries Appendix}}%
\makeatother

\begin{figure}[h]
	\centering
	\subfigure[\label{fig:non-planar-graph}{}]
	{\includegraphics[width=.32\textwidth,page=3]{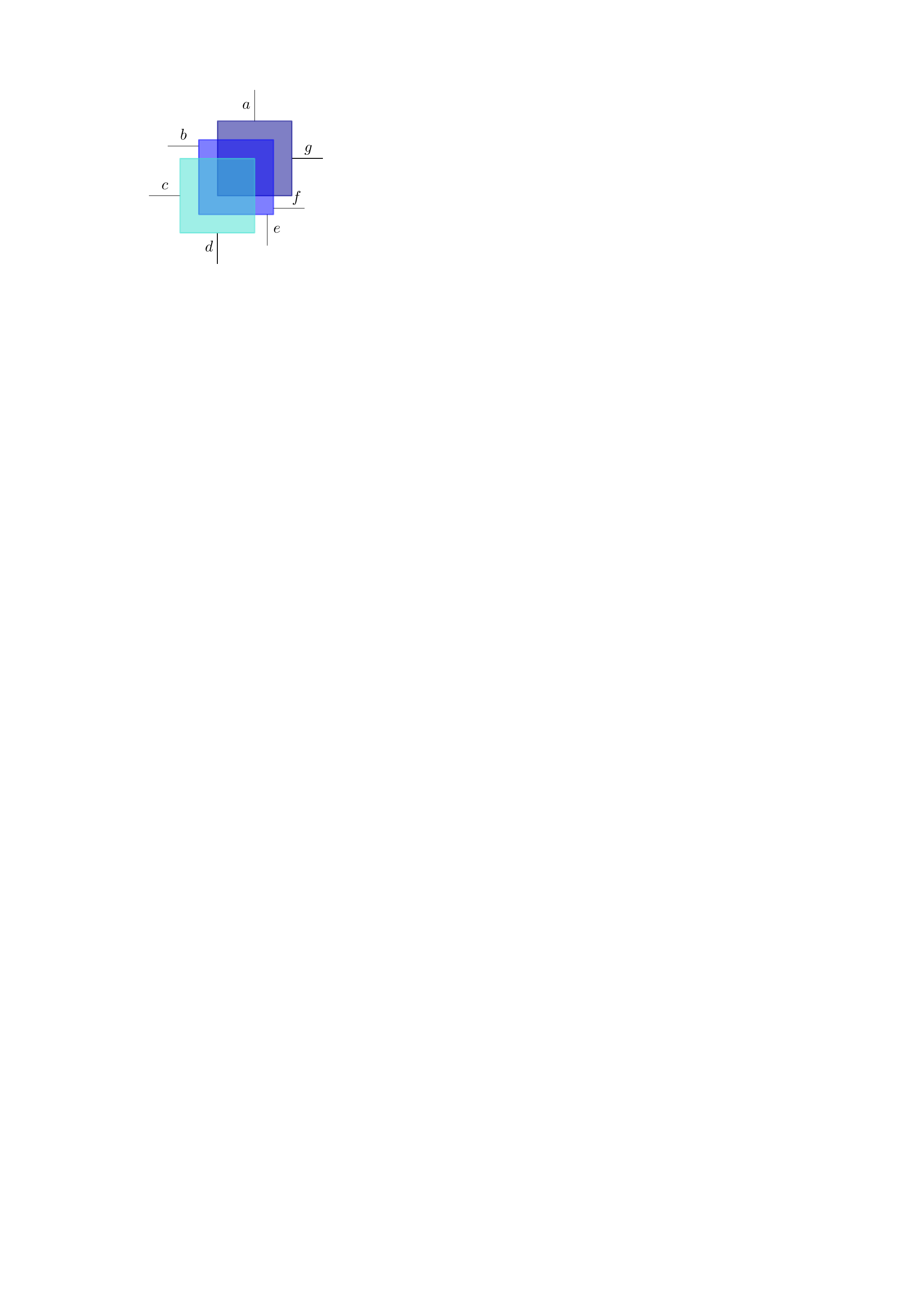}}
	\hfill 	
	\subfigure[\label{fig:inters-link}{}]
	{\includegraphics[width=.32\textwidth,page=4]{intersection-link.pdf}}
    \hfill 	
	\subfigure[\label{fig:h-c2pp}{}]
	{\includegraphics[width=.32\textwidth,page=5]{intersection-link.pdf}}
	\caption{(a)~A non-planar graph $G$. Cliques are highlighted with bold edges. (b)~A~clique-planar drawing of $G$. (c)~Replacing each clique with a path spanning its vertices. Note that differently from (a), in (c) the first vertex and the last vertex of each path have only one place to connect to edges, while the interior vertices have two places: This is what makes the problem non-trivial.}
	\label{fig:example}
\end{figure}

\section{Omitted Details About the Reduction in \cite{kkrss-tpsfp-eurocg18}}\label{app:eurocg}

In this section, we show that the instances produced by the reduction in~\cite{kkrss-tpsfp-eurocg18} are $4$-plane in general.

The variable gadget consists of a triangle $X$ whose edges are $x$, $x_l$ and $x_r$. Edge $x$ is crossing-free and the truth value of $X$ is encoded according to which edge among $x_l$ and $x_r$ is crossing-free.
Given a pair of triangles $T_1$ and $T_2$ whose vertices are ${u,y,z}$ and ${v,y,z}$, they define two faces $f_1$ is and $f_2$ respectively. Concatenate a triangle $T_3$ defined as in the variable gadget with $f_1$ by inserting its crossing-free edge inside $f_1$ and by crossing the other two edges of $T_3$ with $(u,y)$ and $(u,z)$, respectively. Now, concatenate another triangle $T_4$ defined as in the variable gadget with $f_2$. If the crossing-free edge of $T_4$ is inside $f_2$, the gadget composed of $T_1, T_2, T_3$ and $T_4$ is the wire gadget; if the crossing-free edge of $T_4$ is outside $f_2$, the gadget composed of $T_1, T_2, T_3$ and $T_4$ is the inverter gadget.
The splitting gadget consists of three variable gadgets $X,Y$ and $Z$, and two $4$-cliques, concatenated as illustrated inside the blue region in Fig.~\ref{fi:4cr}, where the yellow region contains a variable gadget, the orange region contains a wire gadget and the violet region contains an inverter gadget.
As shown in Fig.~\ref{fi:4cr}, multiple splittings of a variable $X$ lead to an instance where a triangle has two edges with four crossings.

\begin{figure}[tb]
	\centering
	\includegraphics[width=.8\textwidth]{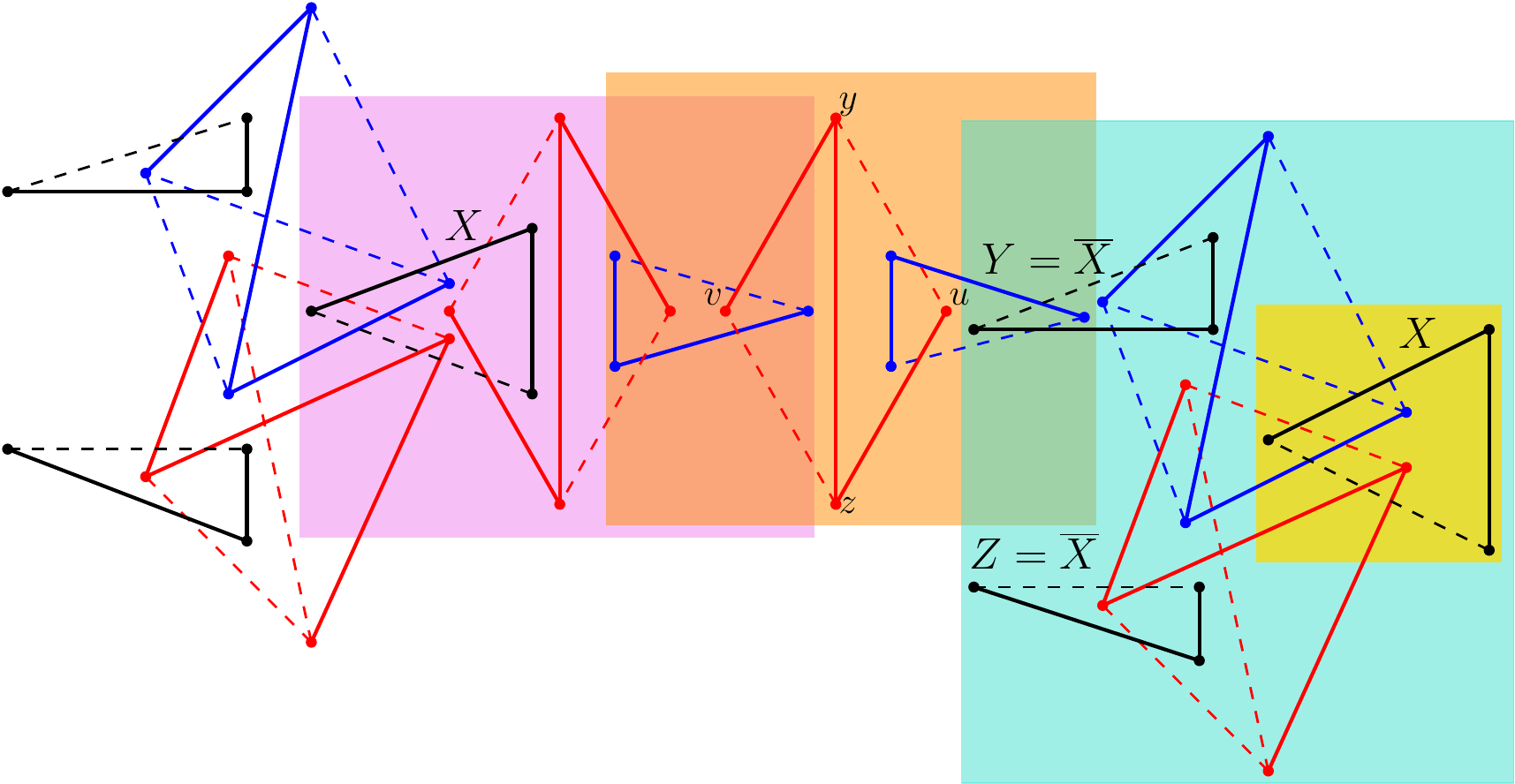}
	\caption{An instance given by the reduction in~\cite{kkrss-tpsfp-eurocg18}. The yellow region contains a variable gadget, the blue region contains a splitting gadget, the orange region contains a wire gadget and the violet region contains an inverter gadget.}\label{fi:4cr}
\end{figure}

\section{Proof of Lemma~\ref{le:variable}}\label{app:proof-lemmavariable}
\lemmavariable*
\begin{proof}
We first consider the possible removals of edges in $t^x_1,\dots,t^x_{2n_x}$ and claim that, in any solution for \ourproblem{4}, one of the two following conditions are satisfied:
\begin{inparaenum}[(i)]
\item\label{c:odd} for each $3$-clique $t^x_i$, if $i$ is odd, then the left edge is removed, while if $i$ is even the right edge is removed;
\item\label{c:even} for each $3$-clique $t^x_i$, if $i$ is odd, then the right edge is removed, while if $i$ is even the left edge is removed.
\end{inparaenum}
Note that this claim is sufficient to prove the statement; in fact, if condition~(\ref{c:odd}) holds (as in Fig.~\ref{fig:variable-gadget}), then the right edge of each $3$-clique $\tau^x_j$ must be removed, in order to resolve its crossings with the right edge of $t^x_{2j-1}$ and with the left edge of $t^x_{2j}$, while if condition~(\ref{c:even}) holds, then the left edge of each $3$-clique $\tau^x_j$ must be removed, in order to resolve its crossings with the left edge of $t^x_{2j-1}$ and with the right edge of $t^x_{2j}$.
	
In order to prove the claim, we consider the possible removals of edges of $t^x_1$. Suppose first that the base edge of $t^x_1$ is removed. Thus, the crossings between the left (right) edge of $t^x_1$ and the left (right) edge of $t^x_2$ are not resolved; this implies that they have to be resolved by removing both the left and the right edge of $t^x_2$, which is not possible. If the right edge of $t^x_1$ is removed, then the crossing between the right edges of $t^x_1$ and $t^x_2$ is resolved, while the one between their left edges is not. Hence, the left edge of $t^x_2$ must be removed. By iterating this argument we conclude that the right (left) edge of each $t^x_i$ with $i$ odd (even) is removed. Symmetrically, we can prove that, if the left edge of $t^x_1$ is removed, then the left (right) edge of each $t^x_i$ with $i$ odd (even) is removed. This concludes the proof of the lemma.
\end{proof}

\section{Proof of Theorem~\ref{th:ourproblem-4-np-complete}}\label{app:proof-theoremourproblem4npcomplete}
\theoremourproblemfournpcomplete*
\begin{proof}
Given an instance of {\sc Planar Positive 1-in-3-SAT}, we construct an instance $G$ of \ourproblem{4} in linear time as described above. We prove their equivalence.

Suppose first that there exists a solution for \ourproblem{4}, i.e., a set of edges of $G$ whose removal resolves all  crossings. By Lemma~\ref{le:variable}, for each variable $x$ either the left or the right edge of each $3$-clique $\tau^x_j$ in the variable gadget $G_x$ is removed. If the right edge is removed, we assign value \texttt{True} to variable $x$, otherwise we assign \texttt{False}.

In order to prove that this assignment results in a solution for the given formula of {\sc Planar Positive 1-in-3-SAT}, we first show that, for each clause $c$ that contains variable $x$, the right (left) edge of the $3$-clique $t_c(x)$ of the clause gadget $G_c$ corresponding to $x$ is removed if and only if the right (left) edge of each $3$-clique $\tau^x_j$ is removed. Namely, consider the chain that connects $t_c(x)$ with a $3$-clique $\tau^x_j$ of $G_x$. Note that, for any two consecutive $3$-cliques along the chain, the left edge of one $3$-clique and the right edge of the other $3$-clique must be removed. Since the chain has odd length, the right (left) edge of $t_c(x)$ is removed if and only if the right (left) edge of $\tau^x_j$ is removed, that is, the truth value of $G_x$ is transferred to the $3$-clique $t_c(x)$ of $G_c$. 

Finally, consider any clause $c$, composed of variables $x$, $y$, and $z$. Let $t_c(x)$, $t_c(y)$, and $t_c(z)$ be the three $3$-cliques of the clause gadget $G_c$ of $c$ corresponding to $x$, $y$, and $z$, respectively; also, let $v$ be the central vertex of the $4$-clique of $G_c$, and let $v_x$, $v_y$, $v_z$ be the vertices of this $4$-clique lying inside $t_c(x)$, $t_c(y)$, and $t_c(z)$, respectively; see Fig.~\ref{fig:clauses}. We assume w.l.o.g. that $v_x$, $v_y$, and $v_z$ appear in this clockwise order around $v$.
As discussed above, the left or the right edge of $t_c(x)$ (of $t_c(y)$; of $t_c(z)$) is removed depending on whether the left or the right edge of each $\tau^x_j$ (of each $\tau^y_j$; of each $\tau^z_j$) is removed. 
We show that, for exactly one of $t_c(x)$, $t_c(y)$, and $t_c(z)$ the right edge is removed, which then implies that exactly one of $x$, $y$, and $z$ is \texttt{True}, and hence the instance of {\sc Planar Positive 1-in-3-SAT} is positive.

Suppose first that for each of $t_c(x)$, $t_c(y)$, and $t_c(z)$ the left edge is removed (and hence all the three variables are set to \texttt{False}), as in Fig.~\ref{fig:clauses-all-false}. This implies that the crossings between the right edges of the three $3$-cliques and the three edges of triangle $(v_x,v_y,v_z)$ are not resolved. Hence, all the edges of this triangle should be removed, which is not possible since the remaining edges of the $4$-clique do not form a path.
Suppose now that for at least two of $t_c(x)$, $t_c(y)$, and $t_c(z)$, say $t_c(x)$ and $t_c(y)$, the right edge is removed (and hence $x$ and $y$ are set to \texttt{True}), as in Fig.~\ref{fig:clauses-two-true}. Since each edge of triangle $(v_x,v_y,v)$ is crossed by the left edge of one of $t_c(x)$ and $t_c(y)$, by construction, these crossings are not resolved. Hence, all the edges of $(v_x,v_y,v)$ should be removed, which is not possible since the remaining edges of the $4$-clique do not form a path of length $4$.
Suppose finally that for exactly one of $t_c(x)$, $t_c(y)$, and $t_c(z)$, say $t_c(x)$, the right edge is removed (and hence $x$ is the only one to be set to \texttt{True}), as in Fig.~\ref{fig:variable-gadget}. Then, by removing edges $(v,v_x)$, $(v_x,v_y)$, and $(v_y,v_z)$, all the crossings are resolved and the remaining edges of the $4$-clique form a path of length $4$, as desired.


	
The proof of the other direction is analogous. Namely, suppose that there exists a truth assignment that assigns a \texttt{True} value to exactly one variable in each clause. Then, for each variable $x$ that is set to \texttt{True} (to \texttt{False}), we remove the right (left) edge of each $3$-clique $t^x_i$, with $i=2j-1$ and $j = 1,\dots,n_x$, we remove the left (right) edge of each $3$-clique $t^x_i$, with $i=2j$ and $j = 1,\dots,n_x$, and we remove the right (left) edge of each $3$-clique $\tau^x_j$, with $j=1,\dots,n_x$. Then, we remove the left or right edge of each $3$-clique in a chain so that for any two consecutive $3$-cliques, one of them has been removed the left edge and the other one the right edge. This ensures that, for each clause $c$, the right edge of exactly one of the three $3$-cliques that belong to the clause gadget $G_c$ has been removed, say the one corresponding to variable $x$, while for the other two $3$-cliques the left edge has been removed. Hence, we can resolve all crossings by removing edges $(v,v_x)$, $(v_x,v_y)$, and $(v_y,v_z)$, as discussed above; see Fig.~\ref{fig:variable-gadget}. The statement follows.
\end{proof}

\end{document}